\documentclass{article}

\usepackage{microtype}
\usepackage{graphicx}
\usepackage{subfigure}
\usepackage{booktabs} %
\usepackage{multirow, multicol}
\usepackage[table]{xcolor} 
\usepackage{comment}
\usepackage{makecell}
\usepackage{float}
\usepackage{tabularx}
\usepackage[textsize=tiny]{todonotes}
\usepackage{hyperref}

\usepackage[accepted]{icml2025}

\usepackage{amsmath}
\usepackage{amssymb}
\usepackage{mathtools}
\usepackage{amsthm}

\usepackage[capitalize,noabbrev]{cleveref}

\theoremstyle{plain}
\newtheorem{theorem}{Theorem}[section]
\newtheorem{proposition}[theorem]{Proposition}

\theoremstyle{definition}

\theoremstyle{remark}

\newcommand{\norm}[2][]{\left\lVert #2 \right\rVert_{#1}}

\usepackage[textsize=tiny]{todonotes}

\icmltitlerunning{Straight but not so fast: Challenges with Rectified Flows in Protein Design.}

\begin{document}

\twocolumn[
\icmltitle{Flows, straight but not so fast: \\Exploring the design space of Rectified Flows in Protein Design.}

\begin{icmlauthorlist}
\icmlauthor{Junhua Chen}{1}
\icmlauthor{Simon V. Mathis}{1}
\icmlauthor{Charles Harris}{1}
\icmlauthor{Kieran Didi}{2,3}
\icmlauthor{Pietro Lio}{1}
\end{icmlauthorlist}

\icmlaffiliation{1}{University of Cambridge, UK}
\icmlaffiliation{2}{University of Oxford, UK}
\icmlaffiliation{3}{NVIDIA}

\icmlcorrespondingauthor{Junhua Chen}{junhua.chen.ios@gmail.com}

\icmlkeywords{Machine Learning, ICML}

\vskip 0.3in
]

\printAffiliationsAndNotice{}  %

\begin{abstract}
Generative modeling techniques such as Diffusion and Flow Matching have achieved significant successes in generating designable and diverse protein backbones. However, many current models are computationally expensive, requiring hundreds or even thousands of function evaluations (NFEs) to yield samples of acceptable quality, which can become a bottleneck in practical design campaigns that often generate $10^4-10^6$ designs per target. In image generation, Rectified Flows (ReFlow) can significantly reduce the required NFEs for a given target quality, but their application in protein backbone generation has been less studied. We apply ReFlow to improve the low NFE performance of pretrained $SE(3)^N$ flow matching models for protein backbone generation and systematically study ReFlow design choices in the context of protein generation in data curation, training and inference time settings. In particular, we (1) show that ReFlow in the protein domain is particularly sensitive to the choice of coupling generation and annealing, (2) demonstrate how useful design choices for ReFlow in the image domain do not directly translate to better performance on proteins, and (3) make improvements to ReFlow methodology for proteins.
\end{abstract} %

\section{Introduction}
\label{section: Intro}

\begin{table*}[h]
\centering
\small
\renewcommand{\arraystretch}{1.3}
\begin{tabular}{@{}p{0.25\linewidth}p{0.30\linewidth}p{0.35\linewidth}@{}}
\toprule
\textbf{ReFlow Design Component} & \textbf{Standard Practices in Computer Vision} & \textbf{Protein-Specific Adaptations} \\
\midrule
\multicolumn{3}{@{}l@{}}{\textit{\textbf{Data Curation}}} \\[0.2em]
\quad Coupling Strategy & Forward coupling, \colorbox{red!20}{Inverted coupling} & Forward coupling \\[0.3em]
\quad \colorbox{yellow!20}{Inference Annealing} & Not applicable & \textbf{Model-dependent optimization} \\[0.5em]

\multicolumn{3}{@{}l@{}}{\textit{\textbf{Training Configuration}}} \\[0.2em]
\quad Time Sampling & Uniform, Exponential, Cosh & Standard distributions \\[0.3em]
\quad Loss Function & Flow matching loss only & \colorbox{green!20}{\textbf{+ Structural losses}}, Axis-angle loss \\[0.5em]

\multicolumn{3}{@{}l@{}}{\textit{\textbf{Inference Settings}}} \\[0.2em]
\quad Discretization Scheme & Uniform, Sigmoid, Exponential & \colorbox{green!20}{\textbf{Noise-focused discretization}} \\[0.3em]
\quad \colorbox{yellow!20}{Inference Annealing} & Not applicable & \textbf{Model-dependent optimization} \\
\bottomrule
\end{tabular}
\caption{\textbf{ReFlow design choices for protein generation reveal domain-specific requirements.} Comparison of standard ReFlow practices with protein-specific adaptations shows that \colorbox{yellow!20}{protein-unique} design choices significantly impact performance, while \colorbox{red!20}{suboptimal} approaches should be avoided and \colorbox{green!20}{beneficial} modifications substantially improve results.}
\label{tab:reflow_design_comparison}
\end{table*}

Generative modelling methods have become an important part of protein design workflows \cite{RFDiffusion, Chroma2023, bose2024se3stochasticflowmatchingprotein} to produce new proteins that satisfy prescribed structural and functional requirements. Increasingly, this setting of \textit{de novo} protein design is now seen as a promising direction for drug discovery and tackling major challenges in medicine.

However, during a practical protein design campaign, such generative models are often used to generate $10^4 \ - 10^6$ backbone samples (depending on the difficulty of the objective) \cite{lauko2025computational,Ahern2025.04.09.648075}, each sample often requiring hundreds of function calls due to the sequential nature of current approaches such as Flow Matching \cite{lipman2023flowmatchinggenerativemodeling} and Diffusion Models \cite{ho2020denoisingdiffusionprobabilisticmodels}. As a result, the high computational cost of sampling current protein generative models restricts the number of designs that can be explored for a given computational budget.

In computer vision, the problem of accelerating inference in flow matching models has been tackled with significant success using the Rectified Flow Algorithm (ReFlow) \cite{liu2022flowstraightfastlearning}. ReFlow provably leads to flow trajectories that are straighter, and consequently amenable to integration by relatively fewer steps. This reduction in integration steps accelerates generation, and has been shown to reduce image generation times by an order of magnitude while preserving image quality and distributional metrics \cite{kim2024simplereflowimprovedtechniques}. 

It is therefore natural to ask whether and how these techniques can be brought to protein design to improve the resource efficiency of design campaigns. Our paper addresses this topic. Our contributions are as follows:
\begin{enumerate}
\vspace{-2em}
\setlength{\itemsep}{0pt}
\setlength{\parskip}{0pt}
\item We generalize ReFlow to manifold data and apply it to pre-trained frame-based models for protein backbone generation.
\item We study the design space for applying ReFlow to protein backbone generation models by considering coupling generation, training and inference separately and give suggestions for design choices.
\item  We find that several design choices such as choice of inference setting (annealing, generation direction) in coupling generation and structural losses not present in other applications of ReFlow have large impacts on the performance of the rectified model, and give suggestions for these design choices. We also find that many proposed improvements used in ReFlow in other domains do not carry over to proteins and highlight where the problems lie.
\item We give guidelines of when rectification is worthwhile compared to simpler finetuning methods.
\end{enumerate}

\section{Background and related work}
Generative models for protein backbone design are a central element of the modern protein design workflow. Diffusion and Flow based models in particular are the methods of choice for backbone generation \cite{RFDiffusion,Ahern2025.04.09.648075}. 

\paragraph{Riemannian Flow matching}

Flow matching (FM) \cite{lipman2023flowmatchinggenerativemodeling, liu2022flowstraightfastlearning, albergo2023stochasticinterpolantsunifyingframework} is a method
for learning continuous normalizing flows (CNFs) for generative modelling, enabling samples from one distribution to be transported to another by integrating an ordinary differential equation (ODE) over a learned vector field. It has been extended to general Riemannian manifolds \cite{chen2024flowmatchinggeneralgeometries}, and is used as the theoretical basis for many protein backbone generation models \cite{yim2023fastproteinbackbonegeneration, bose2024se3stochasticflowmatchingprotein}. On a manifold $\mathcal{M}$, the CNF $\phi_t(\cdot)$ is defined by integrating along a time-dependent vector
field $v(x,t)\in \mathcal{T}_x\mathcal{M}$ where $\mathcal{T}_x\mathcal{M}$ is the tangent space of the manifold at $x\in \mathcal{M}$
\begin{equation}
    \frac{d\phi_t}{dt} = v(\phi_t(x),t),\ \phi_0(x)=x
\end{equation}
where we take $0\leq t\leq 1$. The objective is to learn $v$ such that if $X_0$ is sampled according to $p_0$, then $X_1=\phi_1(X_0)$, obtained by integrating the flow is distributed according to the target distribution $p_1$. We define the distribution of $X_t=\phi_t(X_0)$ to be $p_t$ and call $(p_t)_{0\leq t\leq 1}$ the probability path of the flow. The key insight of Flow Matching and Riemannian Flow Matching is that $v$ can be learned using a simulation free loss 
\begin{equation}
    \mathcal{L}_{FM}=\mathbb{E}[\|v_t(x)-u_t(x|x_1)\|^2_g]
\end{equation}
where $t\sim U[0,1]$, $\|\cdot \|_g$ is the norm induced by the Riemannian metric, $x_1\sim p_1$ is a sample from the target distribution, $x\sim p_t(\cdot|x_1)$ and where $p_t(\cdot|x_1)$ and $u_t(x|x_1)$ are corresponding \textit{conditional} probability path and vector fields which the target flow decomposes into. 

\paragraph{ReFlow algorithm}
Let $p_0, p_1$ be two data distributions on $\mathbb{R}^d$. Rectified Flow (RF) \cite{liu2022flowstraightfastlearning}
is an algorithm which learns ordinary differential equations (ODEs) transporting $p_0$ to $p_1$ and iteratively refines the drift such that the trajectories of the ODE become straight lines (and such flows are called \textit{straight}). ReFlow achieves this by first training a CNF similar to flow matching with straight line interpolant, and then repeatedly applies the \textit{rectification procedure} to it, by using the pre-trained model to generate a dataset of noise-data pairs (termed a \textit{coupling}) and then finetuning the model on this coupling to obtain a new model. The \textit{rectification procedure}, detailed in Algorithm \ref{alg:ReFlow} provably leads to straight flows in the infinite iteration limit.

\paragraph{Protein Backbone generative models}
Chemically, a protein is a chain of linked \textit{amino acids} (also referred to as residues) that folds under electrostatic forces into some 3D structure. The aim of protein backbone generative models is to generate plausible backbone structures which can be realized by some protein. A popular class of backbone generative models which will be the focus of this paper featurizes a protein structure as a sequence of rigid bodies, one per amino acid. Rigid bodies are in turn represented as frames, formally elements of $SE(3)$, and as such the whole protein is represented as an element of $SE(3)^N$. Models belonging to this class include RFDiffusion 1 and 2 \cite{RFDiffusion, Ahern2025.04.09.648075}, FrameDiff \cite{yim2023se}, FrameFlow \cite{yim2023fastproteinbackbonegeneration} and FoldFlow \cite{bose2024se3stochasticflowmatchingprotein}, and QFlow \cite{yue2025reqflowrectifiedquaternionflow}. 

\paragraph{Related Work:} Concurrently to our work on ReFlow for frame-based backbone generation models, ReQFlow \cite{yue2025reqflowrectifiedquaternionflow} implements ReFlow for a closely related quaternion-based formulation. In contrast to \citet{yue2025reqflowrectifiedquaternionflow}, our work focuses on studying the broader design space of ReFlow as applied to proteins, with the goal of understanding how the idiosyncrasies of protein structures as a data modality interacts with ReFlow. In particular, we systematically study how the common practice of inference-time annealing influences the ReFlow coupling and by consequence the rectified model, as well as show that small changes in ReFlow training can significantly affect final outcomes. We provide comparisons to ReQFlow throughout the paper to demonstrate how our insights extend to their work.

\section{ReFlow on $SE(3)^N$}

A key observation is that the ReFlow algorithm as presented in Algorithm \ref{alg:ReFlow} is almost directly applicable to manifold data. By replacing the Euclidean norm in $\mathcal{L}$ with the Riemannian metric and the straight line interpolant with the \textit{geodesic} interpolant, we obtain a version of Rectified Flows for data on manifolds. Whereas the Euclidean ReFlow Algorithm was shown to preserve marginal distributions in \citet{liu2022flowstraightfastlearning}, Theorem 2 of \citet{wu2025riemannianneuralgeodesicinterpolant} also shows that the rectification procedure also preserves marginal distributions of $X_t$ on data residing on a wide class of manifolds that includes the $SE(3)^N$ case which we are interested in. Moreover, Proposition \ref{prop:reudcetransport costs} shows that the property of ReFlow in $\mathbb{R}^N$ of reducing transport distances also carry over when ReFlow is performed on manifold data, thereby suggesting that ReFlow can also help improve flow matching couplings on manifold data. %

\begin{proposition}\label{prop:reudcetransport costs}
Let $(X_0, X_1)$ be the coupling used to train the rectified flow and $(Z_0, Z_1)$ be the coupling induced by the \textit{rectified} model. Then under the assumptions of Theorem 2 in \citet{wu2025riemannianneuralgeodesicinterpolant} we have\begin{equation}\label{eq:transportcost} 
    \mathbb{E}[d_g(Z_0, Z_1)]\leq \mathbb{E}[d_g(X_0,X_1)]
\end{equation} where $d_g$ is the geodesic distance induced by some Riemannian metric on the manifold.
\end{proposition}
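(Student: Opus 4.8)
The plan is to lift the Euclidean transport-cost bound of \citet{liu2022flowstraightfastlearning} to the manifold setting by replacing straight lines with geodesics and the Euclidean displacement with the Riemannian length functional. The two ingredients I would rely on are the \emph{rectified velocity}, which on $\mathcal{M}$ is the conditional expectation $v_t(x)=\mathbb{E}[\dot X_t \mid X_t = x]$ taken inside the tangent space $\mathcal{T}_x\mathcal{M}$, and the \emph{marginal-preservation} property supplied by Theorem 2 of \citet{wu2025riemannianneuralgeodesicinterpolant}, which guarantees that the rectified process $Z_t$ satisfies $Z_t \stackrel{d}{=} X_t$ for every $t\in[0,1]$. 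Since the rectified model is deterministic, $Z_t$ is the integral curve of $v_t$ through $Z_0\sim p_0$, with $\dot Z_t = v_t(Z_t)$.

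The core of the argument is a chain of inequalities. First, because the geodesic distance is the infimum of lengths over all paths joining the endpoints, the rectified trajectory $t\mapsto Z_t$ (which need not itself be a geodesic) gives the upper bound
\begin{equation}
    d_g(Z_0,Z_1)\ \leq\ \int_0^1 \|\dot Z_t\|_g\,dt\ =\ \int_0^1 \|v_t(Z_t)\|_g\,dt .
\end{equation}
Taking expectations and invoking marginal preservation $Z_t\stackrel{d}{=}X_t$ converts the right-hand side into $\int_0^1 \mathbb{E}[\|v_t(X_t)\|_g]\,dt$. I would then apply the conditional Jensen inequality to the norm $\|\cdot\|_g$, which is convex on the vector space $\mathcal{T}_x\mathcal{M}$, to obtain $\|v_t(x)\|_g = \|\mathbb{E}[\dot X_t\mid X_t=x]\|_g \leq \mathbb{E}[\|\dot X_t\|_g \mid X_t = x]$, whence $\mathbb{E}[\|v_t(X_t)\|_g] \leq \mathbb{E}[\|\dot X_t\|_g]$. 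Finally, because $X_t$ is the constant-speed geodesic interpolant between $X_0$ and $X_1$, its speed is constant in $t$ and equal to the endpoint distance, $\|\dot X_t\|_g = d_g(X_0,X_1)$; substituting this and integrating out the now $t$-independent integrand yields $\mathbb{E}[d_g(Z_0,Z_1)] \leq \mathbb{E}[d_g(X_0,X_1)]$, which is \cref{eq:transportcost}.

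The step I expect to require the most care is justifying that the rectified velocity is a genuine, well-defined conditional expectation on the manifold. Unlike a general average of points on $\mathcal{M}$, here all the interpolant velocities $\dot X_t$ consistent with a fixed $X_t = x$ live in the \emph{same} tangent space $\mathcal{T}_x\mathcal{M}$, which is a vector space, so the conditional expectation is an ordinary Bochner expectation of tangent vectors and the convexity of $\|\cdot\|_g$ makes Jensen directly applicable; this is precisely the regularity that the hypotheses of Theorem 2 of \citet{wu2025riemannianneuralgeodesicinterpolant} are designed to secure, also controlling cut-locus and measurability issues in the $SE(3)^N$ case. A secondary point worth stating explicitly is that we never need $Z_t$ to be a geodesic: bounding $d_g(Z_0,Z_1)$ by the length of the rectified trajectory is exactly what makes the rectified coupling cheaper, mirroring the role played by Jensen's inequality over the path in the Euclidean proof.
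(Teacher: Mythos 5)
Your proof is correct and follows essentially the same route as the paper's: both arguments rest on the constant-speed geodesic interpolant giving $\int_0^1\|\dot X_t\|_g\,dt = d_g(X_0,X_1)$, Jensen's inequality applied to the conditional expectation $\mathbb{E}[\dot X_t\mid X_t]$ defining the rectified velocity, marginal preservation from Theorem 2 of \citet{wu2025riemannianneuralgeodesicinterpolant}, and the fact that the length of the rectified trajectory upper-bounds $d_g(Z_0,Z_1)$. The only difference is presentational (you run the chain of inequalities from $Z$ to $X$ rather than from $X$ to $Z$), and your remark that the conditional expectation is an ordinary vector-space expectation in $\mathcal{T}_x\mathcal{M}$ is a worthwhile clarification of why Jensen applies.
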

\begin{proof} 
Same idea as Theorem D.5 of \citet{liu2022flowstraightfastlearning}. See Appendix \ref{section:proof}
\end{proof}

These results highlight that ReFlow is an viable algorithm in the frame-based protein model setting, and in subsequent sections we will study how best to apply ReFlow to proteins. 

\begin{algorithm}[]
   \caption{Rectification Iteration (ReFlow Algorithm)}
   \label{alg:ReFlow}
\begin{algorithmic}
   \STATE {\bfseries Input:} A trained flow velocity model $v_\theta(\cdot, t)$.
   \STATE {\bfseries Output:} The \emph{rectified} flow $v_{\theta'}(\cdot, t)$.
   \STATE
   \STATE {\bfseries Step 1: Generate Coupling}
   \STATE Sample pairs $(X_0, X_1)$ following $dX_t = v_\theta(X_t, t)\,dt$ 
   \STATE by starting from $X_0 \sim p_0$ or $X_1 \sim p_1$.
   \STATE
   \STATE {\bfseries Step 2: Rectification}
   \STATE Minimize $\theta$ in
   \[
   \mathcal{L}(\theta) = \mathbb{E}_{t, X_0, X_1}\left[\|v_\theta(X_t, t) - \partial_t I(X_0, X_1, t)\|^2\right]
   \]
   \STATE where $I(x, y, t)$ is the geodesic (straight line) interpolant between $X_0$ and $X_1$ to obtain $v_{\theta'}$ the \textit{rectified} flow.
\end{algorithmic}
\end{algorithm}

\section{Examining the design space of ReFlow}

We consider the design space of applying ReFlow to protein backbone generation. As shown in Table \ref{tab:reflow_design_comparison}, the ReFlow procedure can be partitioned into 3 phases: Data Generation, Training and Inference. In Section \ref{section:DataCuration}, we examine how the performance of a rectified model changes as we vary the method used to generate its training data. In Section \ref{section:improvingreflow} , we examine how the training methodology of the ReFlow algorithm can impact model performance for proteins. Lastly, we discuss Inference-time design choices in \ref{section:discretization}. We re-use the evaluation protocol of \citet{bose2024se3stochasticflowmatchingprotein}, reporting diversity, designability, self-consistency RMSD (scRMSD) and novelty (to the FoldFlow-OT's training set) as well as the number of designable foldseek clusters c.f \citet{geffner2025proteinascalingflowbasedprotein} when relevant. We focus on the FoldFlow-OT model \cite{bose2024se3stochasticflowmatchingprotein}  as well as the recent QFlow models \cite{yue2025reqflowrectifiedquaternionflow}. We defer most experimental details, as well as details of evaluation to Appendix \ref{section:experimentals}.

\subsection{Data Curation}\label{section:DataCuration}

\subsubsection{Experimental Setting}
Unlike for images, flow matching for proteins comes with a variety of inference-time settings that enables trading-off the diversity of the distribution (and its fidelity to the original training set) for sample quality \cite{geffner2025proteinascalingflowbasedprotein}. For instance, a particularly important setting for frame-based models studied in this section is the use of choice of \textit{inference annealing} for the $SO(3)$ velocity field  \cite{bose2024se3stochasticflowmatchingprotein}, where the rotational component of the predicted velocity field is scaled up during inference time, resulting in the rotational information of protein samples denoising faster than the translational component. Similar train/inference time mismatches also exist for non-frame based methods \cite{geffner2025proteinascalingflowbasedprotein}.

When curating the coupling used for ReFlow training, choices must be made for these parameters. As such, we study the effect that using different inference settings to generate the coupling has on the designability, diversity and secondary structure of the resulting rectified model. For FoldFlow, we generate a paired dataset of 25'100 protein backbone ($X_1$) - noise ($X_0$) samples,  comprised of 100 examples each length between 50-300 amino acids.  We adopt a straightforward ReFlow setup based on FoldFlow's training code, substituting the PDB training dataset for the generated coupling, with each training batch containing different proteins of the same length. For ReqFlow, we follow the curation process of the paper, except that we do not perform designability filtering on the data. When evaluating the model, we use the uniform discretization on $[0,1]$ for ReqFlow \cite{yue2025reqflowrectifiedquaternionflow} as in the original paper, but for FoldFlow we use a custom discretization in the 15 NFE setting as we found it was essential for both FoldFlow-OT and the rectified model to obtain reasonable designability scores. In all cases, we apply rectification once as in \citet{kim2024simplereflowimprovedtechniques}. We discuss discretizations further in Section \ref{section:discretization}.

\subsubsection{Results}\label{section:dataresults}
We present experimental results for FoldFlow-OT in Table \ref{tab:dataimpact} and Figure \ref{fig:secondary-structure-labels}, with corresponding QFlow results in Table \ref{tab:dataimpactreqflow} and Figure \ref{fig:secondary structure QFlow}.
ReFlow consistently improves low-NFE designability across all three data settings, consistent with prior image generation studies \citep{kim2024simplereflowimprovedtechniques, liu2022flowstraightfastlearning}, though sometimes at the expense of high-NFE performance. However, ReFlow exhibits strong sensitivity to the fine-tuning dataset choice. Rectified models tend to adjust to the statistical properties of their coupling datasets in terms of designability, diversity, and secondary structure characteristics, with higher inference annealing corresponding to higher designability but worse diversity.

This pattern is evident in both model architectures: QFlow models rectified on unannealed samples generate more diverse protein backbones with broader secondary structure support but reduced designability, while those trained on annealed samples show the opposite trend. Although distribution shifts could potentially result from memorization \citep{kim2024simplereflowimprovedtechniques, zhu2025analyzingmitigatingmodelcollapse}, our relatively small number of fine-tuning steps and the absence of overfitting (Table \ref{tab:nomemorization}) suggest that these shifts represent inherent byproducts of ReFlow in the protein domain rather than memorization artifacts.

These findings underscore the importance of selecting appropriate inference settings that achieve acceptable designability-diversity tradeoffs before applying ReFlow, because the coupling dataset choice directly influences the rectified model's distribution. This is exemplified by the intermediate inference scaling parameter (c=3) yielding the highest number of designable clusters in both base and rectified models.

\begin{figure}
    \centering
    \includegraphics[width=1.0\linewidth]{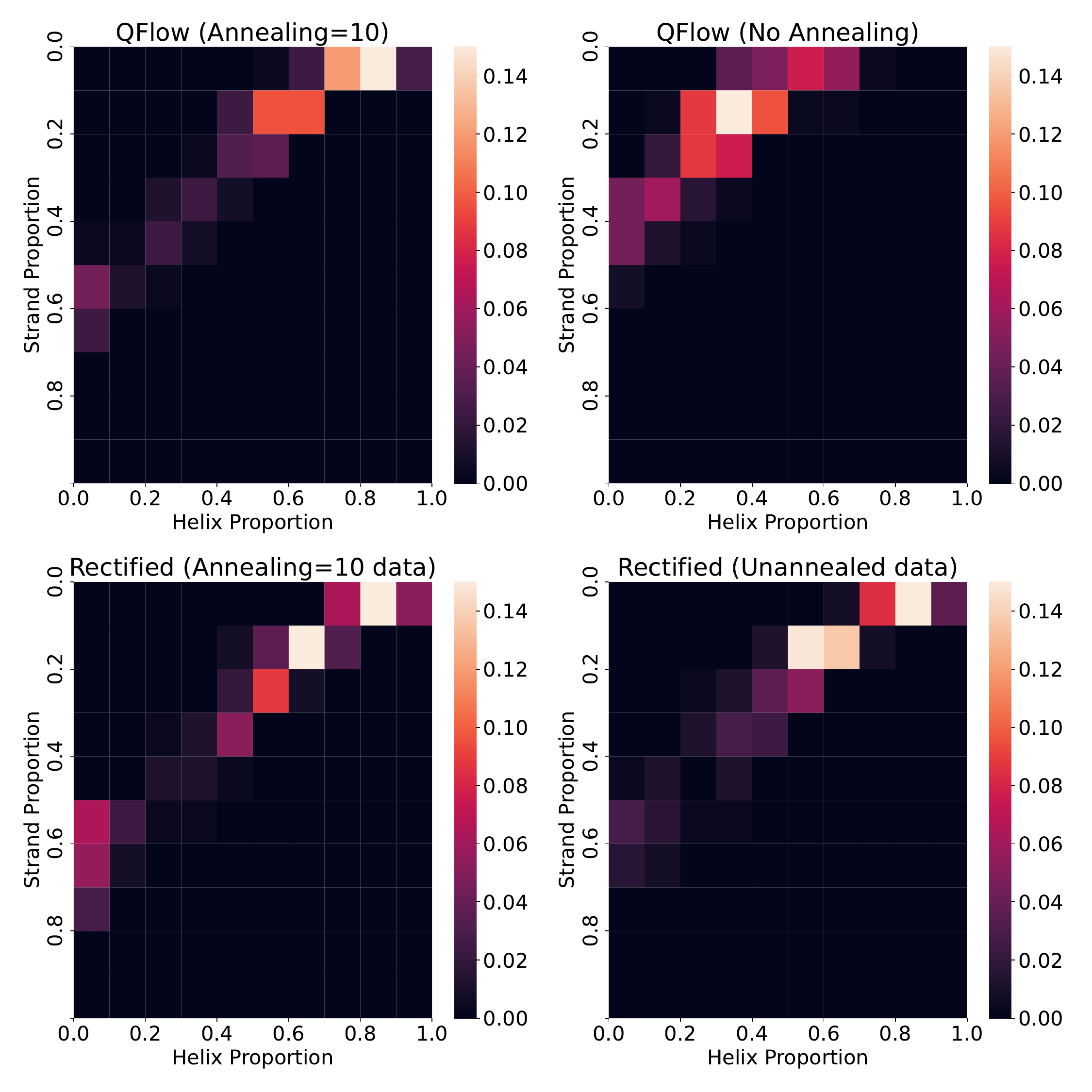}
    \caption{\textbf{Inference annealing settings affect secondary structure diversity through ReFlow coupling selection.} Secondary structure distributions for QFlow show that models rectified on unannealed samples exhibit wider support and improved diversity (Table \ref{tab:dataimpactreqflow}), demonstrating how ReFlow coupling choice can shift the modeled distribution.}
    \label{fig:secondary structure QFlow}
\end{figure}
\begin{table}[ht]
\centering
\resizebox{0.5\textwidth}{!}{%
\begin{tabular}{cc}
\toprule
\textbf{Model} & \textbf{Avg. Max-TM vs Finetune Set $\downarrow$}\\ 
\midrule
FoldFlow-OT  & $0.904\scriptstyle{\pm 0.006}$ \\
FoldFlow-OT-ReFlowed & $0.900\scriptstyle{\pm 0.005}$ \\
\bottomrule
\end{tabular}
}
\caption{\textbf{No signs of overfitting to the fine-tune dastaset}: The samples of FoldFlow-OT (FF-OT) and the rectified model (ReFF-OT) are compared against the rectification dataset. In this case the amount of fine-tuning iterations is insufficient to cause overfitting to the ReFlow coupling.}
\label{tab:nomemorization}
\end{table}

\begin{table*}[t]
\centering
\scriptsize
\begin{tabular}{ll|cc|cccc} \toprule \multirow{2}{*}{Dataset} & \multirow{2}{*}{Metric} & \multicolumn{2}{c|}{Data Generator (FoldFlow-OT)} & \multicolumn{4}{c}{ReFlowed Model (ReFoldFlow-OT)} \\ & & NFE=15, Yes & NFE=50, Yes & NFE=15, No & NFE=15, Yes & NFE=50, No & NFE=50, Yes \\ 
    \midrule 
    \multirow{5}{*}{Annealing=10} & Designability $\uparrow$ & $0.584 \scriptstyle { \pm 0.053 }$ & $\underline{\textbf{0.816}} \scriptstyle \pm 0.041$ & $0.564 \scriptstyle \pm 0.048$ & $0.664 \scriptstyle \pm 0.046$ & $0.656 \scriptstyle \pm 0.050$ & $\textbf{0.776} \scriptstyle \pm 0.044$ \\
    & Avg. scRMSD (\AA) $\downarrow$ & $3.088 \scriptstyle { \pm 0.376 }$ & $\underline{\textbf{2.120}} \scriptstyle \pm 0.315$ & $3.711 \scriptstyle \pm 0.450$ & $3.156 \scriptstyle \pm 0.381$ & $3.099 \scriptstyle \pm 0.374$ & $\textbf{2.313} \scriptstyle \pm 0.364$ \\ 
    & Diversity (Average-TM) $\downarrow$ & $0.425$ & $\underline{\textbf{0.397}}$ & $0.425$ & $0.442$ & $\textbf{0.411}$ & $0.430$ \\ 
    & \# Designable Clusters $\uparrow$ & $51$ & $\underline{\textbf{96}}$ & $60$ & $62$ & $\textbf{77}$ & $76$ \\ 
    & Novelty (Max-TM) $\downarrow$ & $0.819$ & $\textbf{0.815}$ & $\textbf{0.815}$ & $0.822$ & $\underline{\textbf{0.802}}$ & $0.816$ \\ \midrule 
    \multirow{5}{*}{Annealing=3} & Designability $\uparrow$ & $0.372 \scriptstyle { \pm 0.054 }$ & $0.472 \scriptstyle \pm 0.051$ & $0.188 \scriptstyle \pm 0.042$ & $\textbf{0.492} \scriptstyle \pm 0.049$ & $0.204 \scriptstyle \pm 0.040$ & $\underline{\textbf{0.552}} \scriptstyle \pm 0.053$ \\
    & Avg. scRMSD (\AA) $\downarrow$ & $5.716 \scriptstyle { \pm 0.534 }$ & $4.663 \scriptstyle \pm 0.469$ & $7.717 \scriptstyle \pm 0.470$ & $\textbf{4.432} \scriptstyle \pm 0.445$ & $7.429 \scriptstyle \pm 0.504$ & $\underline{\textbf{3.763}} \scriptstyle \pm 0.439$ \\ 
    & Diversity (Average-TM) $\downarrow$ & $0.362$& $0.345$ & $\textbf{0.210}$ & $0.379$ & $\underline{\textbf{0.205}}$ & $0.376$ \\
    & \# Designable Clusters $\uparrow$ & $78$ & $\underline{\textbf{99}}$ & $45$ & $80$ & $45$ & $\textbf{86}$ \\
    & Novelty (Max-TM) $\downarrow$ & $0.769$ & $0.769$ & $\underline{\textbf{0.749}}$ & $0.770$ & $\textbf{0.759}$ & $0.772$ \\ \midrule 
    \multirow{5}{*}{No Annealing} & Designability $\uparrow$ & $0.032 \scriptstyle { \pm 0.021 }$ & $0.044 \scriptstyle \pm 0.024$ & $0.028 \scriptstyle \pm 0.019$ & $\textbf{0.284} \scriptstyle \pm 0.047$ & $0.036 \scriptstyle \pm 0.021$ & $\underline{\textbf{0.320}} \scriptstyle \pm 0.046$ \\
    & Avg. scRMSD (\AA) $\downarrow$ & $10.977 \scriptstyle { \pm 0.430 }$ & $9.832 \scriptstyle \pm 0.453$ & $11.315 \scriptstyle \pm 0.408$ & $\textbf{6.471} \scriptstyle \pm 0.522$ & $11.570 \scriptstyle \pm 0.433$ & $\underline{\textbf{5.838}} \scriptstyle \pm 0.497$ \\
    & Diversity (Average-TM) $\downarrow$ & $\textbf{0.067}$ & $0.171$ & $\textbf{0.067}$ & $0.298$ & $\underline{\textbf{0.066}}$ & $0.264$ \\ 
    & \# Designable Clusters $\uparrow$ & $8$ & $10$ & $7$ & $\textbf{58}$ & $9$ & $\underline{\textbf{68}}$ \\ 
    & Novelty (Max-TM) $\downarrow$ & $0.750$ & $0.727$ & $\underline{\textbf{0.694}}$ & $0.756$ & $\textbf{0.703}$ & $0.745$ \\
    \bottomrule 
\end{tabular}

\caption{\textbf{Comparison of Rectified models under varying annealing schedules, function evaluations (NFE) and choices of training coupling for FoldFlow. “Yes/No” indicates whether inference-time annealing was used (if Yes, a value of 10 was used for annealing).} We also report the corresponding performance of the (model, inference setting)  pair used to generate each training coupling to give a reference. Several observations can be made: The use of inference annealing yields more designable proteins at the expense of reduced diversity for both the base and rectified models, with intermediate values of the annealing parameter being optimal with regards to the total number of unique designable clusters for the base model. The rectified model is strongly affected by the choice of training coupling, with the diversity and designability of the rectified model correlating strongly with the method used to generate the fine-tuning coupling, again with intermediate values of annealing being optimal for coupling generation. In line with prior works on ReFlow, rectification causes degradation in high NFE performance but significantly increases low NFE performance compared to the base model. Any reduction in resulting model designability can be offset by increased model throughput.}
\label{tab:dataimpact}
\end{table*}

\begin{figure*}[!t]
\centering
    \includegraphics[width=1.0\linewidth]{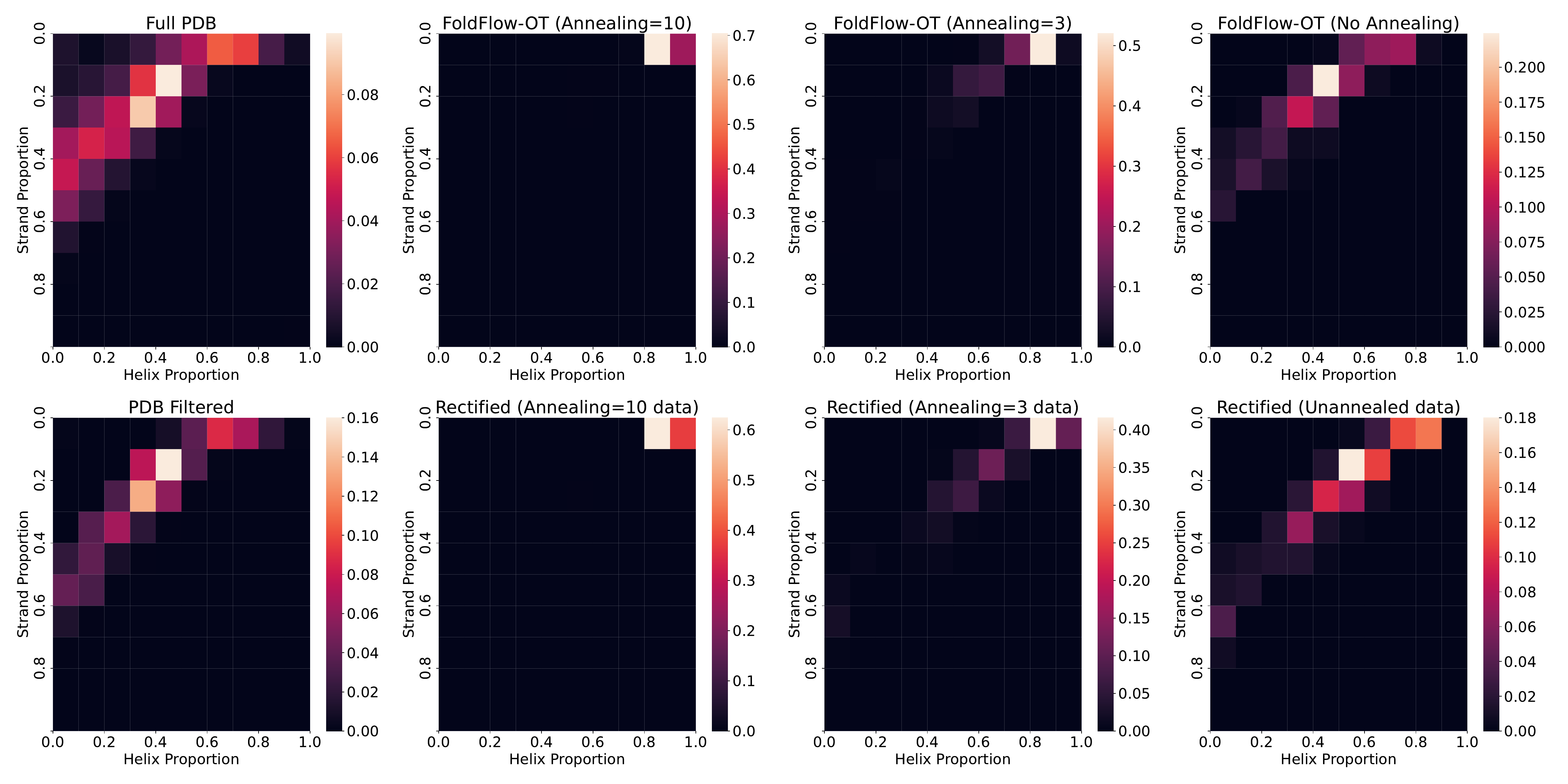}
    \vspace{-20pt}
    \caption{\textbf{Secondary structure distribution for FoldFlow-OT under different inference annealing settings (top) and that of models rectified on the corresponding coupling (bottom)}. The Rectified models were sampled with Inference Annealing=10. The secondary structure statistics of the rectified model is also strongly influenced by the fine-tuning coupling, with greater diversity in the fine-tuning distribution also translating to greater diversity in the fine-tuned model, although there is still some bias towards helical structures from the base model (FoldFlow-OT). This highlights the sensitivity of protein models to the fine-tuning distribution, simultaneously underlining the need to exercise caution when choosing a coupling to apply ReFlow on that is representative of the desired protein distribution, as well as potential opportunities in fine-tuning protein flow matching models on curated data. }
    \label{fig:secondary-structure-labels}
    \vspace{-1.5em}
\end{figure*}

\begin{table*}[t]
\centering
\scriptsize

\begin{tabular}{ll|cc|cccc} \toprule \multirow{2}{*}{Dataset} & \multirow{2}{*}{Metric} & \multicolumn{2}{c|}{Data Generator (QFlow)} & \multicolumn{4}{c}{ReFlowed Model (ReQFlow)} \\ & & NFE=20, Yes & NFE=100, Yes & NFE=20, No & NFE=20, Yes & NFE=100, No & NFE=100, Yes \\ \midrule
    \multirow{5}{*}{Annealing=10} & Designability $\uparrow$ & $0.528 \scriptstyle { \pm 0.058 }$ & $\textbf{0.788} \scriptstyle { \pm 0.049 }$ & $0.404 \scriptstyle { \pm 0.058 }$ & $0.780 \scriptstyle { \pm 0.051 }$ & $0.716 \scriptstyle { \pm 0.054 }$ & $\underline{\textbf{0.896}} \scriptstyle { \pm 0.038 }$ \\
    & Avg. scRMSD (\AA)$\downarrow$ & $3.721 \scriptstyle { \pm 0.467 }$ & $1.963 \scriptstyle { \pm 0.288 }$ & $3.594 \scriptstyle { \pm 0.378 }$ & $\textbf{1.737} \scriptstyle { \pm 0.136 }$ & $1.972 \scriptstyle { \pm 0.193 }$ & $\underline{\textbf{1.398}} \scriptstyle { \pm 0.124 }$ \\
    & Diversity (Average-TM) $\downarrow$ & $\underline{\textbf{0.338}}$ & $\textbf{0.356}$ & $0.359$ & $0.366$ & $0.362$ & $0.389$ \\
    & \# Designable Clusters $\uparrow$ & $109$ & $\underline{\textbf{123}}$ & $68$ & $\textbf{120}$ & $97$ & $91$ \\
    & Novelty (Max-TM) $\downarrow$ & $\underline{\textbf{0.752}}$ & $0.776$ &  $0.764$ & $\textbf{0.762}$ &  $0.776$ & $0.785$ \\ \midrule
    \multirow{5}{*}{No Annealing} & Designability $\uparrow$ & $0.008 \scriptstyle { \pm 0.011 }$ & $0.028 \scriptstyle { \pm 0.020 }$ & $0.144 \scriptstyle { \pm 0.042 }$ & $\textbf{0.624} \scriptstyle { \pm 0.058 }$ & $0.424 \scriptstyle { \pm 0.056 }$ & $\underline{\textbf{0.800}} \scriptstyle { \pm 0.048 }$ \\
    & Avg. scRMSD (\AA) $\downarrow$ & $12.886 \scriptstyle { \pm 0.388 }$ & $9.643 \scriptstyle { \pm 0.473 }$ & $5.767 \scriptstyle { \pm 0.437 }$ & $\textbf{2.508} \scriptstyle { \pm 0.273 }$ & $3.306 \scriptstyle { \pm 0.349 }$ & $\underline{\textbf{1.698}} \scriptstyle { \pm 0.163 }$ \\
    & Diversity (Average-TM) $\downarrow$ & $\underline{\textbf{0.049}}$ & $\textbf{0.071}$ & $0.291$ & $0.357$ & $0.376$ & $0.368$ \\
    & \# Designable Clusters $\uparrow$ & $2$ & $7$ & $33$ & $\underline{\textbf{111}}$ & $63$ & $\textbf{103}$ \\
    & Novelty (Max-TM) $\downarrow$ & $0.767$ & $\underline{\textbf{0.690}}$ & $\textbf{0.738}$ & $0.748$ & $0.753$ & $0.777$ \\ \bottomrule
\end{tabular}

\caption{ \textbf{Comparison of Rectified models under varying annealing schedules, function evaluations (NFE) and choices of training
coupling for QFlow. “Yes/No” indicates whether inference-time annealing was used. (if Yes, a value of 10 was used for annealing)} We also report the corresponding performance
of the (model, inference setting) pair used to generate each training coupling to give a reference. Many of the observations from studying FoldFlow also carry over, such as the tradeoff in diversity, novelty against designability in whether to use inference scaling when generating the coupling or when generating samples with the rectified model.}
\label{tab:dataimpactreqflow}
\vspace{-1.5em}
\end{table*}

\subsubsection{The use of inverted Examples}
The results of Section \ref{section:dataresults} suggest choosing a coupling whose samples are pareto optimal with respect to designability and diversity. A natural idea used with success for images \cite{zhu2025analyzingmitigatingmodelcollapse, kim2024simplereflowimprovedtechniques}, is to invert groundtruth PDB samples via integrating the base model's flow matching ODE backwards to obtain paired noise samples to the groundtruth PDB. Because it is difficult to match the distribution of the PDB while retaining diversity, real PDB samples represent a desirable trade-off between designability and diversity \cite{geffner2025proteinascalingflowbasedprotein}. We implement this coupling for both ReQFlow \cite{yue2025reqflowrectifiedquaternionflow} and FoldFlow using 100 and 50 steps to invert the coupling respectively, but found that training on the inverted coupling in both cases destroyed model performance, with both models reduced to sub-3\% designability. We hypothesize that this performance collapse is due to the paired noise samples obtained by reverse integration being highly non-Gaussian, and that training on non-Gaussian latents leads to poor performance. Inspired by \cite{bodin2025linearcombinationslatentsgenerative}, we run the Kolmogorov-Smirnov test on the Gaussian latents (treating latents as collections of Gaussian samples) and plot a histogram of $p$-values in Figure \ref{fig:gaussianitytest-pvalues}, where the non-Gaussian nature of the latents is apparent. We subsequently perform an ablation study in Table \ref{tab:latentsablation} which shows that even relatively small perturbations of the fine-tuning latent distribution can lead to severe degradations of performance. As such, the results of both experiments support our hypothesis. 

Varying the inference annealing coefficient, noise injection during backward integration and normalizing the latents did not remedy the issue of non-gaussian latents, with each attempt being either unable to preserve fine-tuned model performance, or to lead to straight flows. Our experiments highlight that not all techniques and design choices for applying ReFlow to images carry over straightforwardly to the protein domain. Nevertheless, we consider the use of the inverted coupling to be a promising future direction for ReFlow in proteins. 

\begin{figure}[t]
    \centering
    \includegraphics[width=0.8\linewidth]{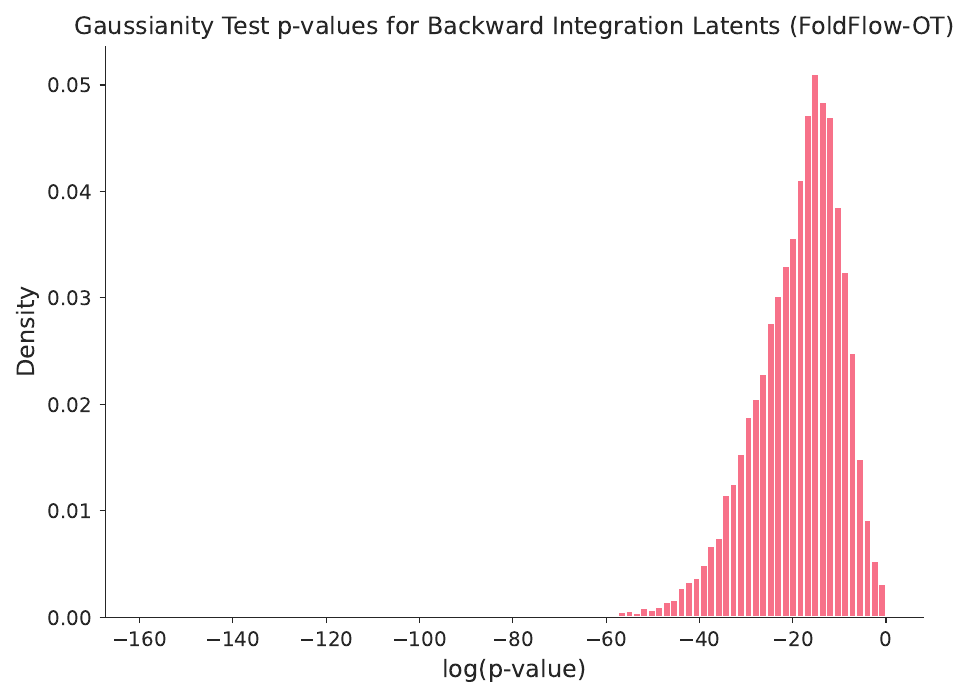}
    \caption{\textbf{FoldFlow backwards integration produces non-Gaussian latents, violating model assumptions.} Distribution of $p$-values (log scale) from Kolmogorov-Smirnov tests shows that 96\% of latents generated by backwards integration have $p < 0.005$, while synthetic centered Gaussian latents achieve high $p$-values (minimum 0.029 across 10,000 samples).}
    \label{fig:gaussianitytest-pvalues}
    \vspace{-1.5em}
\end{figure}

\begin{table}[h]
\centering
\begin{tabular}{lc}
\toprule
\textbf{Finetuning  Coupling} & \textbf{Designability  $\uparrow$} \\
\midrule
\rowcolor{gray!20} 
None & 0.816 \\
Inverted samples & 0.000 \\
IID PDB samples to $\mathcal{N}(0,10^2 I)$ & 0.632 \\
IID PDB samples to $\mathcal{N}(0,8^2 I)$ & 0.432 \\
\bottomrule
\end{tabular}
\caption{\textbf{Inverted coupling causes performance collapse due to latent distribution mismatch.} Designability comparison across FoldFlow-OT coupling variants shows inverted coupling training reduces performance to zero, while PDB resampling maintains comparable performance (0.632). Testing with 0.8-rescaled noise confirms that deviation from the training noise distribution $\mathcal{N}(0,10^2I)$ drives this degradation.}
\vspace{-0.5em}
\label{tab:latentsablation}
\end{table}

\subsection{Improving Training of ReFlow for Proteins}\label{section:improvingreflow}

An aspect of ReFlow fine-tuning on protein models not present in previous areas where ReFlow has been applied is that of structural losses. Previous works on training protein generative models \cite{yim2023se, bose2024se3stochasticflowmatchingprotein, yue2025reqflowrectifiedquaternionflow} have consistently opted to include structural losses during training to penalize physically unrealistic generations, despite deviating from ReFlow theory. While they were found to be useful when training FoldFlow, we show here by example that small details in these losses can have significant effects on the performance of the rectified model:

For FoldFlow in particular, the groundtruth protein backbone is featurized in terms of the 4 heavy atoms in each residue (the $C_\alpha, C, N, O$ atoms) with the $O$ atom given a rotational degree of freedom $\phi$ over idealized alanine coordinates \cite{bose2024se3stochasticflowmatchingprotein}. FoldFlow trains a small MLP head connected to the IPA Network to predict $\phi$, and in Section \ref{section:DataCuration} these values are stored as part of the coupling and used to fine-tune the model during rectification. 

In this section we compare the effect of (1) discarding the structural losses and (2) replacing FoldFlow generated values of $\phi$ in generated couplings with 0 vector. We follow the experimental protocol of Section \ref{section:DataCuration} except for this training detail, and work with both the annealed and semi-annealed datasets, reporting results in Table \ref{tab:varystructuralloss}.

\begin{table}[t]
    \centering
    \scriptsize
    \resizebox{\columnwidth}{!}{
\begin{tabular}{llcccc}
\toprule
\textbf{Dataset} & \textbf{Model} & \textbf{Designability $\uparrow$} & \textbf{scRMSD (\AA) $\downarrow$} & \textbf{Diversity $\downarrow$} & \textbf{Novelty $\downarrow$} \\
\midrule
\multirow{3}{*}{Annealing=10} 
  & Basic ReFlow       & $0.664{\scriptstyle \pm 0.046}$ & $3.156{\scriptstyle \pm 0.381}$ & $0.442$ & $0.822$ \\
  & No Structural Loss & $0.632{\scriptstyle \pm 0.047}$ & $3.720{\scriptstyle \pm 0.427}$ & $0.446$ &  $0.842$       \\
  & Zero-Phi           & $\pmb{0.804{\scriptstyle \pm 0.044}}$ & $\pmb{1.728{\scriptstyle \pm 0.164}}$ & $\pmb{0.424}$ & $\pmb{0.818}$ \\
\midrule
\multirow{3}{*}{Annealing=3} 
  & Basic ReFlow       & $0.492{\scriptstyle \pm 0.049}$ & $4.432{\scriptstyle \pm 0.445}$ & $0.379$ & $\pmb{0.770}$ \\
  & No Structural Loss & $0.544{\scriptstyle \pm 0.047}$ & $4.575{\scriptstyle \pm 0.480}$ & $0.387$ &  $0.800$       \\
  & Zero-Phi           & $\pmb{0.592{\scriptstyle \pm 0.054}}$ & $\pmb{2.945{\scriptstyle \pm 0.371}}$ & $\pmb{0.364}$ &   $0.775$      \\
\bottomrule
\end{tabular}
}
\caption{\textbf{Changing structure losses in ReFlow coupling significantly improves designability without sacrificing diversity.} Performance comparison across structural loss configurations shows that while removing structural loss degrades novelty among designable samples, the $\phi$ replacement strategy provides substantial designability gains without sacrificing novelty or diversity.}
    \label{tab:varystructuralloss}
    \vspace{-1.5em}
\end{table}

While removing structural loss degrades novelty, a surprising result was that avoiding supplying useful $\phi$ information during ReFlow significantly improved model performance, matching or exceeding the performance of the original model at 50 NFE. There are many possible explanations for this observation, ranging from imperfections in model training of the $\phi$ network affecting the rectification procedure, to difficulties in joint numerical optimization of angles, coordinates and the complex and highly specialized structural loss. This illustrates that many domain specific considerations not present in other areas must be made when applying ReFlow to proteins.

We also note that ReQFlow implements the structural loss somewhat differently to FoldFlow and additionally avoids the $\phi$-prediction head, opting to impute the oxygen position using planar geometry. This highlights nuances in the design choices of different models. 

\subsection{Inference Time Settings for ReFlow}\label{section:discretization}

\paragraph{Inference Parameter Selection.} Tables \ref{tab:dataimpact} and \ref{tab:dataimpactreqflow} demonstrate that model behavior changes significantly after rectification for fixed inference parameters, indicating that rectified models require different inference settings than their base counterparts.

\paragraph{Discretization Schemes.} While recent frame-based protein models \citep{bose2024se3stochasticflowmatchingprotein, yue2025reqflowrectifiedquaternionflow} typically employ uniform discretization on $[0,1]$ for ODE integration, we find this approach suboptimal. We systematically evaluate discretization families in Table \ref{tab:discretization_comparison}, categorizing schemes by their density distribution: Noise-Focused (finer near $t=0$), Data-Focused (finer near $t=1$), or Edge-Focused (finer at both endpoints). Representative schemes include the exponential schedule from \citet{geffner2025proteinascalingflowbasedprotein} for Data-Focused, the sigmoid scheme from \citet{kim2024simplereflowimprovedtechniques} for Edge-Focused, and our proposed schedules for Noise-Focused approaches (detailed in Appendix \ref{section:experimentals}).

Noise-Focused discretization achieves superior performance for both QFlow and FoldFlow-OT, with particularly substantial improvements for FoldFlow-OT. This advantage extends to rectified models, where improved discretization schemes enhance overall performance.

The success of Noise-Focused discretization aligns with the inference dynamics of frame-based protein models. As illustrated in Figure \ref{fig:FoldFlowOTDynamics}, FoldFlow-OT exhibits significant trajectory curvature in the Euclidean component near $t=0$, while the $SO(3)$ velocity component shows high magnitude at initialization due to inference annealing. These characteristics complicate ODE dynamics near the noise regime, making finer discretization in this region crucial for reducing integration error.

These findings highlight domain-specific optimization requirements, as \citet{geffner2025proteinascalingflowbasedprotein} report optimal performance with Data-Focused schedules for their protein model. However, their approach does not employ rotation representations or rotation inference annealing, suggesting that differences in discretization preferences may stem from these architectural choices.

\begin{figure*}
    \centering
    \includegraphics[width=1.0\linewidth]{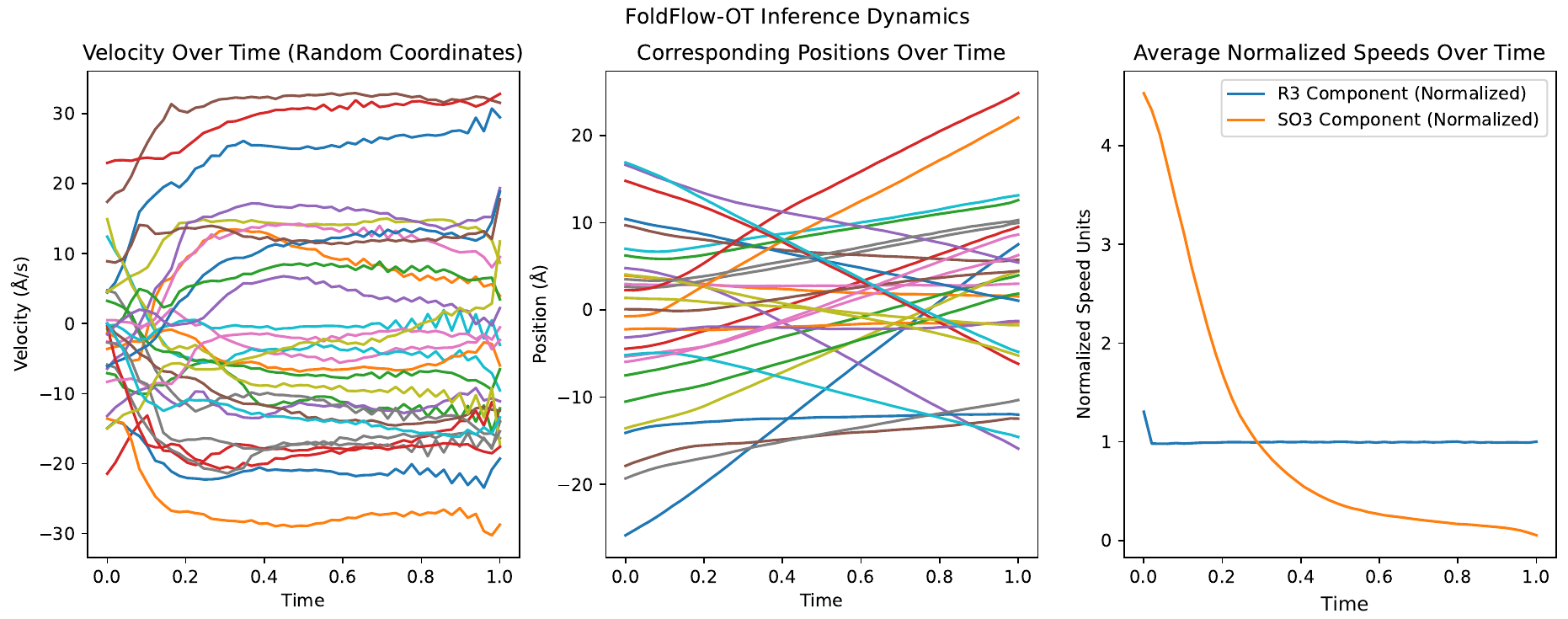}
    \vspace{-2em}
     \caption{\textbf{FoldFlow-OT exhibits complex dynamics near $t=0$ with significant trajectory curvature and rapid frame changes.} Visualization of velocities and positions from 30 random Euclidean coordinates across 50 proteins of varying lengths (100-300 residues, $c=10$ inference annealing) shows substantial curvature in coordinate trajectories near the noise regime, while frame component velocity magnitudes decay significantly after $t\approx 0.3$.}
    \label{fig:FoldFlowOTDynamics}
\end{figure*}

\begin{table}[h]
\centering
\scriptsize
\resizebox{\columnwidth}{!}{
\begin{tabular}{l|cc|cc}
\toprule
\textbf{Discretization Type} & \multicolumn{2}{c|}{\textbf{FoldFlow-OT (15 NFE)}} & \multicolumn{2}{c}{\textbf{QFlow (20 NFE)}} \\
 & \textbf{Designability} $\uparrow$ & \textbf{scRMSD} (\AA) $\downarrow$ & \textbf{Designability}  $\uparrow$ & \textbf{scRMSD} (\AA) $\downarrow$\\
\midrule
Uniform      & $0.076\scriptstyle{\pm 0.031}$ & $7.757\scriptstyle{\pm 0.496}$ & $0.528\scriptstyle{\pm 0.058}$ & $3.721\scriptstyle{\pm 0.467}$ \\
Data-Focused & $0.172\scriptstyle{\pm 0.038}$ & $9.625\scriptstyle{\pm 0.506}$ & $0.276\scriptstyle{\pm 0.050}$ & $6.491\scriptstyle{\pm 0.520}$ \\
Noise-Focused  & $\pmb{0.584\scriptstyle{\pm 0.053}}$ & $\pmb{3.088\scriptstyle{\pm 0.376}}$ & $\pmb{0.584\scriptstyle{\pm 0.056}}$ & $\pmb{2.758\scriptstyle{\pm 0.316}}$ \\
Edge-Focused  & $0.468\scriptstyle{\pm 0.050}$ & $5.510\scriptstyle{\pm 0.516}$ & $0.504\scriptstyle{\pm 0.060}$ & $3.644\scriptstyle{\pm 0.429}$ \\
\bottomrule
\end{tabular}
}
    \caption{\textbf{Noise-Focused discretization outperforms other schemes for frame-based protein models.} Effect of discretization types on designability and scRMSD for FoldFlow-OT and QFlow shows superior performance of Noise-Focused approaches, which are better tailored to the inference dynamics near $t=0$ where trajectory curvature is most significant.}
    \label{tab:discretization_comparison}
    \vspace{-1.5em}
\end{table}

\subsection{When to use ReFlow?}

Given the computational overhead of ReFlow compared to standard fine-tuning approaches, we investigate when ReFlow provides meaningful advantages over simpler techniques such as LoRA fine-tuning \citep{geffner2025proteinascalingflowbasedprotein}. This is as ReFlow requires both ODE integration for each training sample and model fine-tuning on self-generated data, making it considerably more expensive than direct fine-tuning methods.
We compare ReFlow against a control condition using \textit{scrambled} rectified coupling, where noise latents in each noise-data pair are replaced with independent samples from the prior distribution. This maintains identical training conditions while removing the benefits of trajectory rectification. We evaluate both FoldFlow-OT and QFlow models using Annealing=10 datasets, with full experimental details provided in Appendix \ref{section:experimentals}.
Results in Table \ref{tab:resampling_ablation} reveal model-dependent benefits: QFlow shows significant improvement from ReFlow over continued training, while FoldFlow-OT demonstrates no meaningful advantage. We hypothesize that ReFlow's effectiveness diminishes when modeling less diverse data distributions. In the limiting case of flows to point masses, trajectories are already straight and require no rectification.

This hypothesis is supported by the distributional analysis in Figures \ref{fig:secondary structure QFlow} and \ref{fig:secondary-structure-labels}, which shows that FoldFlow-OT's coupling is relatively unimodal and helix-dominated compared to QFlow's more diverse coupling. These findings suggest that ReFlow is most beneficial for highly multimodal distributions, though further investigation is needed to establish definitive guidelines for its application.

\begin{table}[h]
    \centering
    \scriptsize
\resizebox{\columnwidth}{!}{
\begin{tabular}{cc|cccc}
\toprule
\multicolumn{2}{c|}{\textbf{Experiment Configuration}} & \multicolumn{4}{c}{\textbf{Performance Metrics}} \\
\textbf{Model} & \textbf{NFEs} &  \textbf{Designability} $\uparrow$ & \textbf{Avg. scRMSD} (\AA) $\downarrow$ & \textbf{Diversity} $\downarrow$ & \textbf{Novelty} $\downarrow$ \\
\midrule
FoldFlow-Scrambled & 15 & $0.888\scriptstyle{\pm 0.038}$& $1.535\scriptstyle{\pm 0.174}$& $0.425$ & $0.824$\\
FoldFlow-ReFlow & 15 & $0.844\scriptstyle{\pm 0.043}$ & $1.748\scriptstyle{\pm0.210}$ & $0.404$ & $0.811$\\
\midrule 
QFlow-Scrambled & 20 & $0.648\scriptstyle{\pm 0.058}$ & $2.711\scriptstyle{\pm 0.330}$& $0.361$ & $0.739$\\
QFlow-ReFlow & 20 & $0.780\scriptstyle{\pm 0.051}$ & $1.737\scriptstyle{\pm 0.136}$ & $0.366$ & $0.762$\\ 
\bottomrule
\end{tabular}
}
    \caption{\textbf{ReFlow provides greater benefits for multimodal distributions.} Comparison of model pairs finetuned on model samples (scrambled coupling) versus ReFlow under identical settings shows that while ReFlow training significantly increased QFlow designability at low NFE, no benefits were observed for FoldFlow-OT, potentially due to differences in data distribution multimodality.}
    \label{tab:resampling_ablation}
\end{table}

\section{Discussion}

Our research shows that ReFlow's mathematics can readily be extended to manifold data and thereby to frame-based protein backbone design; however, the inference acceleration without quality or diversity loss observed in computer vision applications requires significant domain-specific optimization. Our experiments highlight two main challenges.
First, we observe that several common optimizations of ReFlow for images, such as the use of inverted coupling strategies and temporal discretization scheme, counterintuitively proved detrimental for ReFlow in frame-based backbone generation. 
Conversely, substantial performance improvements were achieved through careful optimization of domain-specific design choices, such as inference annealing schedules and protein-tailored components in the loss function.
These findings underscore that successful adaptation of generative models across domains requires considering domain-specific constraints rather than direct transfer of existing techniques. 
A second significant challenge concerns the evaluation of generated protein diversity. As discussed in Sections \ref{section:DataCuration} and \ref{section:discretization}, achieving appropriate balance between sample quality and diversity is critical for practical protein design applications. However, standard diversity metrics such as average pairwise TM-score suffer from limited interpretability, making it difficult to assess the practical significance of differences between model performances. In contrast, biologically-informed metrics such as secondary structure distributions (Figure \ref{fig:secondary-structure-labels}) provide more interpretable insights into the functional diversity of generated protein ensembles.
These observations suggest two promising directions for future research: systematic investigation of how diffusion and flow acceleration techniques, including consistency models \citep{song2023consistencymodels} and flow map matching \citep{boffi2024flowmapmatching}, can be effectively adapted to protein structure generation, and development of improved diversity metrics that better capture biologically relevant aspects of protein structural variation \citep{geffner2025proteinascalingflowbasedprotein}.

\section*{Acknowledgements}

The authors would like to thank Francisco Vargas for his
insightful discussions that contributed to this work. 

\section*{Impact Statement}
This paper aims to study and advance the efficiency of current frame based methods for protein backbone design. No direct societal consequences are forseen by this work.

\bibliography{example_paper}
\bibliographystyle{icml2025}

\newpage
\appendix
\onecolumn

\section{Proof of Prop \ref{prop:reudcetransport costs}}\label{section:proof}

Define for convenience $X_t=I(X_0,X_1,t)$ and $\dot X_t=\partial_tI(X_0,X_1,t)$ where $\partial_t$ represents the derivative with respect to $t$. Let $u(x,t)$ denote the rectified velocity field from minimizing $\mathcal{L}$ in Algorithm \ref{alg:ReFlow}. We have: 
\begin{equation}
\begin{aligned}
\mathbb{E}[d_g(X_0, X_1)]&=\mathbb{E}_{(X_0,X_1)}\left[\int_0^1 \norm[X_t]{\partial_t I(X_0,X_1,t)} dt\right]\ \text{ \hspace{1cm}\color{gray} (as $I(X_0,X_1,\cdot)$ traces a geodesic)}\\
&= \int_0^1 \mathbb{E}_{(X_0,X_1)}\left[\|\dot X_t\|_{X_t}\right]dt \text{ \hspace{1.7cm}\color{gray} (Fubini's theorem)}\\
&=\int_0^1 \mathbb{E}_{X_t,X_1}\left[\|\dot X_t\|_{X_t}\right]dt\ \text{ \hspace{1.8cm}\color{gray}(as $\dot X_t$ at $X_t$ depends only on $X_t$, $X_1$ and $t$)}\\
&=\int_0^1 \mathbb{E}_{X_t}\left[\mathbb{E}_{X_1}[\|\dot X_t\|_{X_t}|X_t]\right]dt\ \text{\hspace{1cm}\color{gray} (tower property)}\\
&\geq \int _0^1 \mathbb{E}_{X_t}\left[\|\mathbb{E}_{X_1}[\dot X_t|X_t]\|_{X_t}\right]dt \ \text{ \hspace{1cm}\color{gray}(Jensen's inequality)}\\
&= \int_0^1 \mathbb{E}_{X_t}\left[\|u(X_t,t)\|_{X_t}\right]dt \ \text{\hspace{0.5cm}\color{gray} (the rectified velocity field, $u(X_t,t)=\mathbb{E}\left[\dot X_t|X_t\right]$ c.f \cite{wu2025riemannianneuralgeodesicinterpolant})}\\
&= \int_0^1 \mathbb{E}_{Z_t}\left[\|u(Z_t,t)\|_{Z_t}\right]dt\ \hspace{1.8cm}\color{gray}\text{(as }Z_t \text{ and } X_t \text{ have same marginals)} \\
&=\mathbb{E}_{(Z_t)_{0\leq t \leq 1}}\left[\int_0^1 \|u(Z_t,t)\|_{Z_t}dt\right]\\
&\geq \mathbb{E}_{(Z_0,Z_1)}\left[d_g(Z_0,Z_1)\right] 
\end{aligned}
\end{equation}
where the last inequality follows from the fact that $\int_0^1 \|u(Z_t,t)\|_{Z_t}dt$ corresponds to the length of the ODE-induced path from $Z_0$ to $Z_1$ and thus is an upper bound on $d_g(Z_0,Z_1)$. $\square$

\section{Experimental Details}\label{section:experimentals}

\subsection{Evaluation Metrics}

We follow exactly the evaluation protocol of \cite{bose2024se3stochasticflowmatchingprotein}, which was also used in \cite{yue2025reqflowrectifiedquaternionflow, geffner2025proteinascalingflowbasedprotein}. We generate 50 backbones each of sizes [100,150,200,250,300], and evaluate our metrics (Designability, scRMSD, Diversity, Novelty, Number of Clusters) on the generated backbones. We give a description of the metrics used:

\paragraph{Designability and scRMSD:} We use this metric to evaluate whether a protein backbone can be formed by
folding an amino acid chain. For each backbone, we generate 8 sequences with ProteinMPNN \cite{dauparas2022robust} at
temperature 0.1, and predict their corresponding structures using ESMFold \cite{lin2022language}. Then we compute
the minimum RMSD (known as scRMSD) between the predicted structures and the backbone
sampled by the model. The designability score is the fraction
of samples satisfying scRMSD being less than $2Å$. We use a corrected version of the aligned-RMSD implementation in \cite{yim2023se}, leading to slight deviations from previously reported values.

\paragraph{Diversity:} This metric quantifies the diversity of the generated backbones. This involves calculating
the average pairwise structural similarity among designable samples, broken down by protein length. Specifically, for each length $L$ under consideration, let $S_L$ be the set of designable structures. We compute $TM(b_i,b_j)$ for all distinct pairs $(b_i, b_j)$ within $S_L$. The mean of these TM-scores represents the diversity for length L. The final diversity score is the average of these means across all tested lengths L. Since TM-scores closer to 1 indicate higher similarity, superior diversity is reflected by lower values of this aggregated score.

\paragraph{Novelty:} We evaluate the structural novelty by finding the maximum TM-score between a generated
structure and any structure in the training set for FoldFlow, reproduced following the instructions of \citet{yim2023se} and containing 23474 entries, using Foldseek \cite{article}. A lower resulting
maximum TM-score signifies a more novel structure. The Foldseek command used for novelty was taken from \citet{geffner2025proteinascalingflowbasedprotein}. Novelty was only evaluated on designable structures.

\paragraph{Number of Designable Clusters:} We cluster the designable backbones based on a TM-score threshold of 0.5 using Foldseek and report the number of different clusters. This metric balances designability and diversity, measuring the number of "distinct" designable proteins that the model generates. The command used to perform the clustering was taken from \citet{geffner2025proteinascalingflowbasedprotein}.

\subsection{Training Details}
\paragraph{Data Curation Experiments:} We generate the FoldFlow-OT coupling using 50 integration steps and default hyperparameters from the repository with the exception of $SO(3)$ inference scaling which was varied. We did not find additional integration steps helpful for sample quality for FoldFlow. Based on the results of \cite{yue2025reqflowrectifiedquaternionflow}, we generate QFlow samples with 100 integration steps. A single coupling dataset made for each of the \texttt{No Annealing}, \texttt{Annealing=3}, \texttt{Annealing=10} settings and reused for all experiments. We sample $t\sim U[0,1]$.

We retain all hyperparameters from FoldFlow's training codebase. We opt to use the Axis-Angle rotation loss which FoldFlow also used for training, as we found it to be marginally beneficial compared to the $L^2$ loss. We perform rectification training for $20,000$ batches for FoldFlow and 3 epochs for QFlow (approximately $24, 000$ batches). Note that a fair comparison of QFlow and FoldFlow is beyond the scope of this work, and these parameters are only meant to serve as reasonable values for each base model.

\paragraph{Improved Training Experiments:} We use the same training parameters as FoldFlow-OT except for the structural losses.

\paragraph{Discretization:} We supply the choices of discretization used here. We present the non-uniform discretizations used in Table \ref{tab:discretization_schedules}. We generated the Data and Edge Focused schedules using the formulas given in \citet{geffner2025proteinascalingflowbasedprotein} and \citet{kim2024simplereflowimprovedtechniques} respectively. For the Edge-Focused Sigmoid schedule, we pick $\kappa=5$ as a reasonable
value with similar spacing properties to our Noise-Focused schedules. Our Noise-Focused schedule for FoldFlow was handcrafted based on observed inference dynamics in Fig.  \ref{fig:FoldFlowOTDynamics} and testing on a small number of examples. While it works well, there is likely still room for improvement. The Noise-Focused schedule for QFlow was a time-reversal of the Data-Focused Schedule. We report the designabilities averaged over 50 samples each of lengths [100,150,200,250,300]. 

\begin{table*}[h]
\centering
\scriptsize
\begin{tabular}{lcc}
\toprule
\textbf{Discretization Class} & \textbf{FoldFlow (15 NFE)} & \textbf{ReQFlow (20 NFE)} \\
\midrule
\makecell[l]{\textbf{Data-Focused}} &
\makecell[l]{[0.010, 0.040, 0.083, 0.129, 0.180,\\ 0.235, 0.295, 0.360, 0.430,\\ 0.507, 0.590, 0.680, 0.778,\\ 0.885, 1.000]} &
\makecell[l]{[0.010, 0.121, 0.229, 0.325, 0.410,\\ 0.486, 0.554, 0.615, 0.669,\\ 0.717, 0.760, 0.798, 0.832,\\ 0.862, 0.889, 0.914, 0.935,\\ 0.954, 0.971, 0.986, 1.000]} \\
\midrule
\makecell[l]{\textbf{Edge-Focused}} &
\makecell[l]{[0.010, 0.034, 0.080, 0.138, 0.211,\\ 0.298, 0.396, 0.500, 0.604,\\ 0.702, 0.789, 0.862, 0.920,\\ 0.966, 1.000]} &
\makecell[l]{[0.010, 0.023, 0.051, 0.085, 0.126,\\ 0.173, 0.228, 0.289, 0.356,\\ 0.427, 0.500, 0.573, 0.644,\\ 0.711, 0.772, 0.827, 0.874,\\ 0.915, 0.949, 0.977, 1.000]} \\
\midrule
\makecell[l]{\textbf{Noise-Focused}} &
\makecell[l]{[0.010, 0.025, 0.050, 0.075, 0.100,\\ 0.125, 0.150, 0.175, 0.200,\\ 0.225, 0.250, 0.275, 0.460,\\ 0.640, 0.820]} &
\makecell[l]{[0.010, 0.014, 0.029, 0.046, 0.065,\\ 0.086, 0.111, 0.138, 0.168,\\ 0.202, 0.240, 0.283, 0.331,\\ 0.385, 0.446, 0.514, 0.590,\\ 0.675, 0.771, 0.879, 1.000]} \\
\bottomrule
\end{tabular}
\caption{Discretization schedules used in FoldFlow and ReQFlow for different classes of time discretizations.}
\label{tab:discretization_schedules}
\end{table*}

\paragraph{Scrambling Experiments:} For the QFlow pair of models the training procedure is exactly the same as for the Data Experiments. However, we used a fine-tuned rectification procedure for the FoldFlow pair of models that attained the highest designability. In particular, we use improved structural losses practices from Section \ref{section:improvingreflow}, and sample $p(t)\propto 0.1^t$ instead of $p(t)\propto 1$ (i.e prioritizing noisy samples), and sample proteins with length $l$ with probability proportional to $l^{2.5}$ in our dataset. 

\section{Straightness in Frame-based Flow models}

As Frame-based Flow models have a significant Euclidean component, we found it helpful to track for diagnostic purposes an analogous \textit{straightness} metric for the Euclidean component of $SE(3)^N$ as the one defined on $\mathbb{R}^N$ in \cite{liu2022flowstraightfastlearning}. In particular, we monitor the \textit{straightness} of the Euclidean part of $SE(3)^N$, defined as  
\begin{equation}
    STN_C=\frac{1}{N_r}\int_0^1 \|X_1^C-X_0^C-v^C(X_t,t)\|^2dt
\end{equation}
where the $C$ denotes the \textbf{C}oordinate part of the protein representation and $N_r$ is the number of residues. Note that a lower value of $STN_C$ indicates straighter paths, and a flow with geodesic trajectories in $SE(3)^N$ (under the Riemannian metric of \citep{bose2024se3stochasticflowmatchingprotein}) will have $STN_C$ necessarily equal 0 (i.e the Euclidean component of the trajectory is a straight line). 

We compute the average straightness for FoldFlow-OT and a rectified model in Table \ref{tab:straightness_results} on 50 samples each of lengths [100,150,200,250,300],  and found that applying ReFlow in $SE(3)^N$ did significantly reduce $STN_C$, and that $STN_C$ was not significantly affected by inference annealing. This is further confirmed qualitatively in Fig \ref{fig:trajectoriesReFlowed}, where the rectified FoldFlow-OT model exhibits qualitatively straighter trajectories compared to the base model in Fig \ref{fig:FoldFlowOTDynamics}.
\begin{table}[h!]
\centering
\resizebox{0.5\textwidth}{!}{
\begin{tabular}{@{}lccc@{}}
\toprule
\textbf{Model} & \textbf{Use Inference Annealing} &  \textbf{$STN_C$} \\
\midrule
FoldFlow-OT   & No        & $45.627\scriptstyle{\pm0.663}$ \\
FoldFlow-OT & Yes        & $34.913\scriptstyle{\pm0.535}$ \\
FoldFlow-OT-Reflowed & No   & $5.642\scriptstyle{\pm0.017}$  \\
FoldFlow-OT-Reflowed & Yes & $6.236\scriptstyle{\pm0.033}$  \\
\bottomrule
\end{tabular}
}
\caption{\textbf{Comparison of  $STN_C$ straightness metrics across models}, with and without inference-time scaling. Results are reported with confidence intervals and computed using 50 uniformly spaced integration steps. Rectification leads to straighter paths in the $\mathbb{R}^3$ components. Inference Scaling does not affect straightness in the Euclidean component.}
\label{tab:straightness_results}
\end{table}

\begin{figure}
    \centering
    \includegraphics[width=1.0\linewidth]{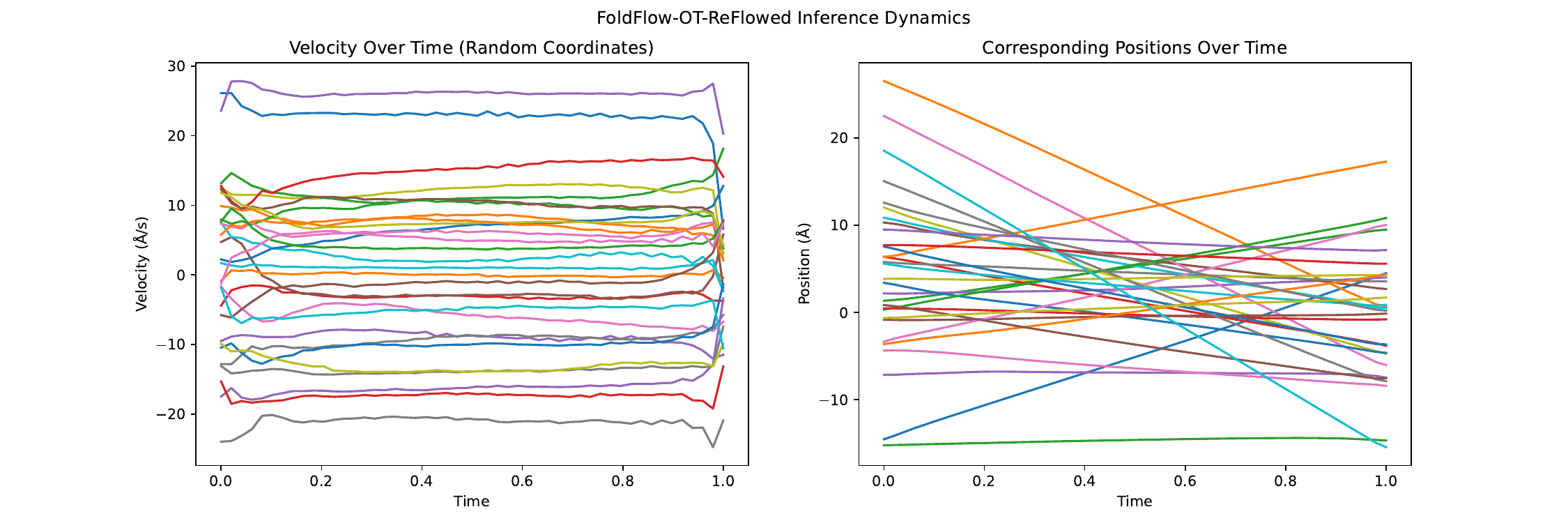}
    \caption{We repeat the visualization for Fig.\ref{fig:FoldFlowOTDynamics} for our rectified model. The position coordinate trajectories of our rectified model shows significantly less curvature especially near $t=0$ and look almost indistinguishable from straight lines.}
    \label{fig:trajectoriesReFlowed}
\end{figure}

\end{document}